\crefname{section}{Sect.}{Sect.}
\Crefname{section}{Section}{Sections}
\renewcommand\@formatdoi[1]{\ignorespaces}
\newcommand{\commentout}[1]{}
\DeclareFontFamily{U}{MnSymbolC}{}
\DeclareSymbolFont{MnSyC}{U}{MnSymbolC}{m}{n}
\DeclareFontShape{U}{MnSymbolC}{m}{n}{
    <-6>  MnSymbolC5
   <6-7>  MnSymbolC6
   <7-8>  MnSymbolC7
   <8-9>  MnSymbolC8
   <9-10> MnSymbolC9
  <10-12> MnSymbolC10
  <12->   MnSymbolC12%
}{}
\DeclareMathSymbol{\powerset}{\mathord}{MnSyC}{180}
\tikzstyle{state} = [circle, draw, text width=.5cm, text centered, minimum height=.5cm, minimum width=.5cm]
\tikzstyle{component} = [rectangle, draw, text width=8em, text centered, minimum height=3em, fill=white]
\tikzstyle{smcomponent} = [component, minimum width=3cm, minimum height=2.5cm]
\tikzstyle{transition} = [draw, -stealth']
\tikzstyle{signal} = [draw, -latex']
\tikzstyle{config} = [densely dotted]
\colorlet{eventcolor}{green!50!black}
\colorlet{periodiccolor}{blue!50!black}
\tikzstyle{event} = [draw=eventcolor, thin, fill opacity=.3, pattern=north west lines, pattern color=eventcolor]
\tikzstyle{periodic} = [draw=periodiccolor, thin, fill opacity=.3, pattern=north east lines, pattern color=periodiccolor]
\tikzstyle{signalname} = [text width=10em, minimum height=2em]
\tikzstyle{nameright} = [signalname, align=left]
\tikzstyle{namecenter} = [minimum height=2em, align=center, rotate=65]
\tikzstyle{nameleft} = [signalname, align=right]
\newcommand*{\eg}{e.g.\@\xspace}
\newcommand*{\ie}{i.e.\@\xspace}
\newcommand*{\wrt}{w.r.t.\@\xspace}
\newcommand*{\etc}{%
    \@ifnextchar{.}%
        {etc}%
        {etc.\@\xspace}%
}
\newcommand*{\cf}{%
    \@ifnextchar{.}%
        {cf}%
        {cf.\@\xspace}%
}
\newcommand*{\etal}{%
    \@ifnextchar{.}%
        {et~al}%
        {et~al.\@\xspace}%
}
\NewDocumentCommand{\twopartdef}{ m m m o}{
  \left\{
    \begin{array}{ll}
      #1 & \mbox{if } #2 \\
      #3 & \IfNoValueTF{#4}{\text{otherwise}}{\mbox{if } #4}
    \end{array}
  \right.
}
\NewDocumentCommand{\threepartdef}{m m m m m o}{
  \left\{
    \begin{array}{lll}
      #1 & \mbox{if } #2 \\
      #3 & \mbox{if } #4 \\
      #5 & \IfNoValueTF{#6}{\text{otherwise}}{\mbox{if } #6}
    \end{array}
  \right.
}
\NewDocumentCommand{\fourpartdef}{m m m m m m m m o}{
  \left\{
    \begin{array}{llll}
      #1 & \mbox{if } #2 \\
      #3 & \mbox{if } #4 \\
      #5 & \mbox{if } #6 \\
      #7 & \IfNoValueTF{#8}{\text{otherwise}}{\mbox{if } #8}
    \end{array}
  \right.
}
\newcommand\handcraftedclock[4][2]{%
  \begin{tikzpicture}[scale=#1,line cap=round,line width=#1*3pt]
    \filldraw [fill=periodiccolor!20!white] (0,0) circle (2cm);
    \foreach \angle / \label in
      {0/3, 30/2, 60/1, 90/12, 120/11, 150/10, 180/9,
      210/8, 240/7, 270/6, 300/5, 330/4}
    {
      \draw[line width=#1*1pt] (\angle:1.8cm) -- (\angle:2cm);
      \draw (\angle:1.4cm) node[scale=#1]{\textsf{\label}};
    }
    \foreach \angle in {0,90,180,270}
      \draw[line width=#1*2pt] (\angle:1.6cm) -- (\angle:2cm);
      \node[draw=none,font=\tiny,text=black,scale=#1] at (0,.9cm) {EMSOFT 2019};
      \draw[rotate=90,line width=#1*2pt] (0,0) -- (-#2*30-#3*30/60:0.7cm); 
      \draw[rotate=90,line width=#1*1.5pt] (0,0) -- (-#3*6:1cm); 
      \draw[rotate=90,line width=#1*.6pt,red] (0,0) -- (-#4*6:1.2cm); 
      \path [fill=black] (0,0) circle (3pt);
      \path [fill=red] (0,0) circle (1.5pt);
  \end{tikzpicture}%
}
\newcommand{\clockicon}{\handcraftedclock[.085]{10}{8}{50}}
\newcommand\from\colon
\renewcommand{\epsilon}{\varepsilon}
\NewDocumentCommand{\bool}{O{}}{\mathds{B}^{#1}}
\newcommand{\reals}{\mathds{R}}
\newcommand{\naturals}{\mathds{N}}
\newcommand{\rtlola}{\textsc{RTLola}\xspace}
\newcommand{\lola}{\textsc{Lola}\xspace}
\newcommand*{\LLC}{\textsc{LLC}\xspace}
\newcommand*{\HLC}{\textsc{HLC}\xspace}
\newcommand*{\signal}[1]{\textsl{#1}}
\newcommand*{\register}[1]{\textbf{#1}}
\newcommand*{\component}[1]{\textsc{#1}}
\newcommand*{\smstate}[1]{\texttt{#1}}
\newcommand*{\numout}{{n^{\uparrow}}}
\newcommand*{\numwind}{{n^{w}}}
\newcommand*{\numin}{{n^{\downarrow}}}
\newcommand*{\numtrig}{{n^{*}}}
\newcommand*{\sizets}{{s_{\ts}}}
\newcommand*{\ts}{\mathit{ts}}
\newcommand*{\sizeev}{{s_{\mathit{ev}}}}
\newcommand*{\clkrate}{\mathit{\xi}}
\newcommand*{\sclk}{\signal{sclk}\xspace}
\newcommand*{\hclk}{\signal{hclk}\xspace}
\newcommand*{\numdl}{{\ensuremath{\#\mathit{dl}}}}
\newcommand*{\external}{\ensuremath{\mathit{external}}}
\newcommand*{\clkperiod}{\signal{\ensuremath{\xi}}}
\DeclareMathOperator{\csr}{csr}
\DeclareMathOperator{\dep}{dep}
\DeclareMathOperator{\fin}{fin}
\DeclareMathOperator{\map}{map}
\DeclareMathOperator{\layer}{layer}
\DeclareMathOperator{\evalexpr}{evalexpr}
\DeclareMathOperator{\tar}{tar}
\DeclareMathOperator{\capa}{\kappa}
\DeclareMathOperator{\wdep}{wdep}
\DeclareMathOperator{\dltarget}{dl\_target}
\DeclareMathOperator{\deadline}{dl}
\newcommand*{\hyperperiod}{\mathord{\Pi}}
\newcommand*\backlog{\ensuremath{\mathit{bl}}\xspace}
\newcommand*\buffer{\ensuremath{\mathcal{B}}\xspace}
\newcommand*\abuffer{\ensuremath{\widetilde{\buffer}}\xspace}
\DeclareMathOperator\extbuffer\oplus
\DeclareMathOperator{\shift}{<\kern-3pt<}
\DeclareMathOperator{\dld}{dld}
\newcommand*\buffsize{\ensuremath{\mathcal{L}}\xspace}
\DeclareMathOperator{\decbuffer}{dec}
\DeclareMathOperator{\size}{size}
\definecolor{CommentColor}{rgb}{0.16,0.60,0.16} 
\definecolor{eclipseBlue}{RGB}{42,0.0,255}
\definecolor{eclipseGreen}{RGB}{63,127,95}
\definecolor{eclipsePurple}{RGB}{127,0,85}
\lstdefinelanguage{Lola}{
  keywords=[0]{constant, input, output, trigger, timeinput, frequency},
  keywordstyle=[0]\color{eclipseBlue}\bfseries,
  keywords=[1]{int, bool, string, ipv4, ipv6, Float64},
  keywordstyle=[1]\color{eclipseGreen},
  keywords=[2]{if, then, else},
  keywordstyle=[2]\color{eclipseBlue},
  keywords=[3]{invoke, terminate, extend},
  keywordstyle=[3]\color{eclipsePurple},
  keywords=[4]{sqrt, ground_speed, abs},
  keywordstyle=[3]\color{eclipsePurple},
    sensitive=false,
    comment=[l]{//},
    morecomment=[s]{/*}{*/},
    morestring=[b]',
    morestring=[b]"
}
\begin{document}

\title[FPGA Stream Monitoring]{FPGA Stream-Monitoring of Real-time Properties}
\author{Jan Baumeister}
\affiliation{%
	\institution{Saarland University}
	\department{Department of Computer Science}
	\city{Saarbrücken}
	\state{Saarland}
	\country{Germany}
}
\orcid{0000-0002-8891-7483}
\email{jbaumeister@react.uni-saarland.de}

\author{Bernd Finkbeiner}
\affiliation{%
  \institution{Saarland University}
  \department{Department of Computer Science}
  \city{Saarbrücken}
  \state{Saarland}
  \country{Germany}
}
\email{finkbeiner@react.uni-saarland.de}

\author{Maximilian Schwenger}
\affiliation{%
  \institution{Saarland University}
  \department{Department of Computer Science}
  \city{Saarbrücken}
  \state{Saarland}
  \country{Germany}
}
\orcid{0000-0002-2091-7575}
\email{schwenger@react.uni-saarland.de}

\author{Hazem Torfah}
\affiliation{%
  \institution{Saarland University}
  \department{Department of Computer Science}
  \city{Saarbrücken}
  \state{Saarland}
  \country{Germany}
}
\email{torfah@react.uni-saarland.de}

\keywords{Real-time Properties, Runtime Verification, FPGA}

\acmConference[EMSOFT19]{International Conference on Embedded Software}{October 13 -- 18, 2019}{New York City}

\begin{abstract}
  An essential part of cyber-physical systems is the online evaluation of real-time data streams. Especially in systems that are intrinsically safety-critical, a dedicated monitoring component inspecting data streams to detect problems at runtime greatly increases the confidence in a safe execution. Such a monitor needs to be based on a specification language capable of expressing complex, high-level properties using only the accessible low-level signals. Moreover, tight constraints on computational resources exacerbate the requirements on the monitor. Thus, several existing approaches to monitoring are not applicable due to their dependence on an operating system. 

  We present an FPGA-based monitoring approach by compiling an \rtlola specification into synthesizable VHDL code. \rtlola is a stream-based specification language capable of expressing complex real-time properties while providing an upper bound on the execution time and memory requirements. The statically determined memory bound allows for a compilation to an FPGA with a fixed size. An advantage of FPGAs is a simple integration process in existing systems and superb executing time. 
  The compilation results in a highly parallel implementation thanks to the modular nature of \rtlola specifications. This further increases the maximal event rate the monitor can handle.

\end{abstract}

\maketitle

\renewcommand{\shortauthors}{Baumeister~et~al.}

\section{Introduction}\label{sec:intro}

With the growing autonomy of cyber-physical systems, the evaluation,
aggregation, and monitoring of real-time data have become essential for
ensuring the safety of the system.  A principled approach to building
such monitors is provided by stream-based specification languages like
\rtlola~\cite{rtlola,streamlab}. Input streams that collect data from sensors, networks, etc.,
are filtered and combined into output streams that contain data
aggregated from multiple sources and over multiple points in time such
as over sliding windows of some real-time length. Trigger
conditions over these output streams then identify critical
situations.

Previous work has been very successful in using stream-based
specifications for analyzing recorded data streams, such as the flight
data of drones~\cite{uav1,streamlab} and network traces~\cite{lola2}. However, tools that have been
developed for the offline analysis of recorded data cannot directly be
used for online monitoring, such as for an onboard monitoring
component on a drone. The reason is the substantial software overhead
of such offline tools.  Cyber-physical systems operate under
narrow constraints on the available resources. A monitor must,
specifically, process all data in real time and within the available
memory.

In this paper, we present a compilation approach that realizes \rtlola
specifications on field-programmable gate arrays (FPGAs).  FPGAs have
dramatic advantages over software-based solutions in terms of
processing speed due to the inherent parallelism, and also in terms of
other factors such as energy consumption, weight, and ease of
integration within the cyber-physical system.

In \rtlola, input streams are event-driven, \ie, without a
priori known frequencies; output streams are typically periodic. This
difference is reflected in the realization of the monitor as a
two-module architecture consisting of a high-level controller and a
low-level controller. The role of the high-level controller is to
receive the events, prepare stream evaluations and to schedule periodic
tasks.  The low-level controller then computes new stream values based on
the information received from the high-level controller and triggers
an alarm when appropriate.

A key challenge for the compilation is the treatment of sliding window
expressions. In general, there is no bound on the memory needed to the
store the potentially unbounded number of events received during the
time period of the window.  Our monitoring circuit splits the full
window into smaller chunks, where the data can be pre-aggregated
without loss of precision. As a result, the number of registers needed
for the monitor can (under some mild assumptions on the aggregation
functions) be determined statically.

The immediate compilation to a hardware description language allows us to achieve a high level of parallelism. 
For this, we analyze the specification to identify modular sub-structures and evaluate them in parallel. 
We showcase the impact of this analysis with a synthetic case study. 
Furthermore, we demonstrate the practicality of the compilation by presenting experimental data from two realistic case studies from avionics and network monitoring. 
Both case studies indicate that the compilation utilizes the benefits of hardware: the implementation is highly efficient, requires only a small board, and consumes less than \SI{2}{\watt} of power.

The main contribution of this paper is an automatic compilation of an \rtlola specification into an FPGA monitor. 
The resulting circuits have a clear structure following the formal \rtlola semantics. 
The monitor is decoupled from the observed system. 
Unlike instrumentation-based approaches~\cite{tessla2,tesslaold}, the monitor is independent of the origin of the data. 
Furthermore, there are no assumptions on the frequency of the inputs granted it is lower than the maximum clock frequency of the FPGA.

The monitor utilizes the inherently parallel nature of hardware: the high-level controller is organized into a pipeline architecture, which ensures that new events can enter the controller before the processing of the previous events has been completed. 
In the low-level controller, however, the evaluation order ensures that independent streams are processed in parallel. 
Moreover, the monitor is highly space and energy efficient. 
Unlike interpreter-based approaches~\cite{lola,tessla2}, which include a general-purpose runtime environment, the compiled circuit is strictly limited to the operations that actually occur in the specification. 
As a result, the monitors of our case studies are able to run on small FPGA boards with little power~($<$~\SI{2}{\watt}). 

\subsection{Related Work}\label{sec:relatedwork}
Most of the earlier work on formal runtime monitoring  was based on temporal logics \cite{Drusinsky:2000:TRA:645880.672089,Lee99runtimeassurance, Finkbeiner+Sipma/01/Checking,ltl,Kupferman:2001:MCS:569028.569032,Havelund:2002:SMS:646486.694486}. The approaches vary between inline methods that realize a formal specification as assertions added to the code to be monitored \cite{Havelund:2002:SMS:646486.694486}, or outline approaches that separate the implementation of the monitor from the one of the system under investigation \cite{Finkbeiner+Sipma/01/Checking}. Based on these approaches and with the rise of real-time temporal logics such as MTL \cite{MTL} and STL \cite{STL}, a series of works introduced monitoring algorithms for real-time properties \cite{RobustMonSTL,monitoringSTL,Basin:2015:MMF:2772377.2699444,aerial}.

First translations from temporal logics to monitoring circuits have been introduced with the tools FoCs~\cite{RTl2Circuit}, developed at IBM Haifa, P2V~\cite{P2V}, a compiler that translates assertions written in sPSL~\cite{sPSL} to Verilog code, BusMOP~\cite{busmop}, which synthesized monitor circuits from specifications written in past-time linear temporal logic for monitoring PCI bus traffic, and MBAC~\cite{Boule:2008:AAS:1297666.1297670}, an automata-based monitor synthesizer for PSL properties. 
Inspired by these constructions, an optimized approach for bounded future properties was presented in~\cite{ltl2circuits}. Hardware runtime monitors for real-time properties were presented by Jaksic~et~al.~\cite{stl2fpga}\@, where  monitors for  STL specifications  were implemented in an FPGA. Further work on FPGA implementations of real-time temporal specification was introduced with the tool R2U2~\cite{r2u2tool,r2u2}, an outline monitoring approach that allows for monitoring specifications in MTL including future-time specifications.

Temporal logics come with the advantage of providing formal guarantees on the space and time complexity of the synthesized monitors. However, a major drawback of these logics is their expressiveness. 
When monitoring cyber-physical systems, one needs to express properties beyond yes and no verdicts (for example with some degree of arithmetic operation) to be able to monitor realistic properties of the system. 
Stream-based languages over complex datatypes like \rtlola~\cite{rtlola,streamlab} provide such expressiveness and further maintain a desirable level of formal guarantees.

The stream-based approach to monitoring was pioneered by the specification language \lola~\cite{lola}. \lola\ is related to  
synchronous programming languages like Lustre~\cite{lustre,lustre2}, and Esterel~\cite{esterel}, which have been widely used for the development of digital circuits~\cite{surveysyncproglang}. In contrast to these languages, \lola\ is a descriptive language, which subsumes the temporal logics and can express both past and future properties.
A feature of \lola\ is that upper bounds on the memory required for monitoring can be computed statically. \rtlola\ extends \lola\ with asynchronous streams and real-time features such as sliding windows.
Two other extensions of \lola\ are TeSSLa and Striver.
TeSSLa~\cite{tessla2} allows for monitoring piece-wise constant signals where streams can emit events at different speeds with arbitrary latencies. It relies on the instrumentation of C code and is thus not independent of the monitored system. Moreover, \rtlola\ comes with the feature of computing aggregations over sliding windows, and allows for the decoupling of the computation of output streams from variable input event rates via fixed-rate clocks. The main difference between \rtlola\ and Striver~\cite{striver} is that \rtlola\ has both variable-rate and fixed-rate streams and provides convenient, native operators such as sample-and-hold and sliding windows that translate between the two types of streams. The fixed rate in \rtlola\ allows for a more direct translation to a hardware implementation of the monitor.   

An approach for compiling synchronous \lola\ has been presented in~\cite{maltry}. We remove the assumption of synchronously arriving data and add real-time capabilities to the specification language.

\section{RTLola}\label{sec:rtlola}
\rtlola\cite{rtlola} is a \emph{stream-based} specification language with real-time features based on the specification language \lola~\cite{lola}. In stream-based runtime monitoring, sensor readings are interpreted as streams of input data. This streams are fed into a stream engine that computes new sequences of data called output streams based on the values of input streams. The output streams compute statistics over the sensor data and allow for stating verdicts about the monitored system. The computation rules for output streams are defined in \rtlola by a stream equation, which is a defining equation that maps a stream variable to a stream expression. 
Consider for example a GPS module in a drone that delivers data about the current longitude and latitude, and a monitor that checks if the GPS module is delivering data in appropriate frequencies. An \rtlola specification for defining such a monitor is given by the following stream definitions:
\begin{lstlisting}
input gps: (Float64, Float64)
output gps_glitch: Bool@1Hz:= 
  gps.aggregate(over:2s,using:count) < 10
trigger gps_glitch "GPS sensor frequency < 5Hz"
\end{lstlisting}

The stream \lstinline{gps} is an input stream that represents the readings of the GPS module and is expected to deliver data with a frequency greater than or equal to \SI{5}{\hertz}. 
To check whether this data is delivered with the expected frequency, we define the output stream \lstinline{gps_glitch} that computes a sliding window with a duration of two seconds over the stream \lstinline{gps}. 
The stream \lstinline{gps_glitch} is computed in a frequency of \SI{1}{\hertz} and checks whether ten values are received from the GPS module in the last two seconds. 
The window over the input stream \lstinline{gps} is computed via the expression \lstinline{gps.aggregate(over:2s,using:count)}, which counts the number of data values of \lstinline{gps} in the last two seconds. 
If the number of values is less than 10, then \lstinline{gps_glitch} evaluates to true. 
In this case, an alarm is raised with the message \lstinline{"GPS sensor frequency < 5Hz"}. This alarm is defined by the trigger expression \lstinline{trigger gps_glitch}. 

The stream above is a \emph{periodic} stream and as such computed at a fixed frequency.
In addition to that, \rtlola also allows for the definition of \emph{event-based} streams by omitting the frequency. 
Event-based streams are evaluated whenever streams occurring in its stream expression are evaluated.
For example, if we want to check whether a vehicle is slowing down, we can compute the change in velocity between the last two velocity sensor readings:
\begin{lstlisting}
input velo: Float64	
output slowing_down: Bool := 
  velo - velo.offset(by:-1).defaults(to:0) < 0
\end{lstlisting}
The stream \lstinline{slowing_down} is computed every time \lstinline{velo} receives a new value. To compute the difference, the stream expression uses the \emph{offset operator} to access the last (\lstinline{.offset(by:-1)}) and current value of the stream \lstinline{velo} and then compute the difference between these two values. In case the value of an offset operation is not defined, the default operator (\lstinline{.defaults(to:d)}) returns the value $d$. In the example above, \lstinline{velo.offset(by:-1)} is not defined before receiving the first velocity reading, so the default value $0$ is used instead.

In the case where an output stream is defined over more than one stream, the output stream is evaluated only if all streams it depends on are evaluated as well. If one of these values is missing, one can still enforce the computation of the stream using the \emph{sample-and-hold operator} (\lstinline{.hold()}). This operator accesses the last value computed for a stream. If it is not present, the provided default operator (\lstinline{.defaults(to:d)}) is used. The following specification clarifies the role of this operator.
\begin{lstlisting}
input gps: (Float64, Float64)
input height: Float64
output too_low: Bool := if zone(gps) 
  then (height.hold().defaults(to:300)) < 300 
  else false
trigger too_low "Flying low in inhabited area"
\end{lstlisting}
The function \lstinline{zone} determines whether the drone is in an inhabited area. When the vehicle is in this area, the specification checks whether its current height (\lstinline{height}) is less than 300 feet. If this is the case, an alarm is raised because it violates the flight regulations for inhabited areas.

\rtlola imposes some rules on how streams may access the values of other streams. 
\Cref{fig:rtlolaframework} shows the general picture of \rtlola specifications. 
\begin{figure}[t]
  \scalebox{1}{
    \begin{tikzpicture}
  \node[draw, minimum height=1.5cm, minimum width=2cm](event)    at (-2,-2) {Event-based};
  \node[draw, minimum height=1.5cm, minimum width=2cm](periodic) at (2,-2)  {Periodic};
  \node (inputs) at (0,0) {Inputs};
  \foreach \offset in {0,.15,-.125} {
    \draw[signal] (\offset,-.2) -- ++(0,-.5-\offset) -- ++(-2,0) -- ++(0,-.55+\offset);
    \draw[signal] (\offset,-.2) -- ++(0,-.5-\offset) -- ++(+2-2*\offset,0) -- ++(0,-.55+\offset);
  }
  
  \newcommand\InOutLower{-2.75}
  
  \node (outputs) at (0,-4) {Output};
  \foreach \offset in {0,-.125} {
    \draw[signal] (-2-\offset,\InOutLower) -- ++(0,-.5-\offset) -- ++(+1.8,0) -- ++(0,-.55+\offset);
    \draw[signal] (+2+\offset,\InOutLower) -- ++(0,-.5-\offset) -- ++(-1.8,0) -- ++(0,-.55+\offset);
  }
  
  \draw[signal] (-2.125,\InOutLower) -- ++(0,-.3) -- ++(-1.2,0) -- ++(0,+1) -- ++(+.32,0);
  \draw[signal] (+2.125,\InOutLower) -- ++(0,-.3) -- ++(+1.2,0) -- ++(0,+1) -- ++(-.32,0);
  
  \draw[signal] (-1,-1.9) -- ++(+2,0);
  \node[align=center] () at (0,-1.75) {\footnotesize 0-order hold};
  \draw[signal] (+1,-2.1) -- ++(-2,0);
  \node[align=center] () at (0,-2.4) {\footnotesize sliding window,\\[-4pt] \footnotesize 0-order hold};
  
  \node () at (2.75,-1.5) {\clockicon};

%
%
	
\end{tikzpicture}
  }
  \caption{Stream accesses of event-based and periodic streams in \rtlola}
  \label{fig:rtlolaframework}	
\end{figure}
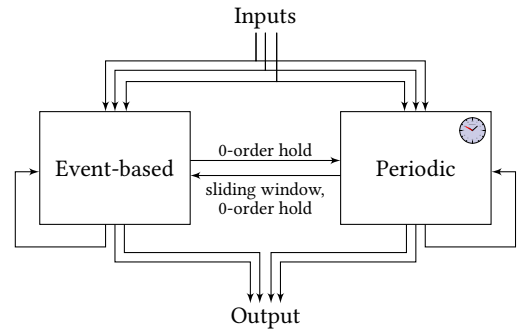
The values of an output stream may be used in the definitions of other output streams as long as the following rules are respected:

\noindent\textit{Access via sliding window:} Periodic streams may access values of other streams via a sliding window without any further restriction. 

\noindent\textit{Access via offset operator:}
When accessing a stream with the offset operator, an \rtlola specification must respect the following rules:

\noindent\textit{1. Accessing periodic streams in event-based streams:}
These accesses are only allowed with the sample-and-hold operator. 

\noindent\textit{2. Accessing event-based streams in event-based streams:}
These accesses are always valid. However, the accessing stream is only extended if all accessed streams are extended at the same time. The sample and hold operation eliminates this dependency.

\noindent\textit{3. Accessing event-based streams in periodic streams:} 
Periodic streams only access event-based streams with the sample-and-hold operator. 

\noindent\textit{4. Accessing periodic streams in periodic streams:} 
A periodic stream $s$ may access the values of another periodic stream $s'$ if and only if the frequency of $s'$ is an integer-multiple of the one of $s$. Otherwise the access is only allowed via the sample-and-hold operator. 

\noindent\textit{5. Recursive stream access:} 
Any stream is allowed to access its own history of values as long as it does not create any circular access like accessing itself with an offset of 0. Consider the following specification:
\begin{lstlisting}
output num_glitches: UInt32 := 
  num_glitches.offset(by:-1).defaults(to:0) + 
    (if gps_glitch then 1 else 0)
\end{lstlisting}
  The output stream is an event-based stream that is evaluated every time a new value is computed for \lstinline{gps_glitch}. Note that there is no need for the sample-and-hold operator as the output stream only depend on \lstinline{gps_glitch}. If the new value of \lstinline{gps_glitch} is true, then the new value of \lstinline{num_glitches} is computed by increasing its last value (\lstinline{num_glitches.offset(by:-1).defaults(to:0)}) by one. Otherwise, if \lstinline{gps_glitch} is false, the new value of \lstinline{num_glitches} is equal to its last one. 

For the full syntax and type system of \rtlola\ we refer the reader to the technical report\footnote{The technical report also describes parametrization with dynamic stream creation, which we do not consider here.}~\cite{rtlola}.

In the rest of the paper we use the variables $\numout$, $\numin$ and $\numtrig$ to indicate the number of output streams, number of input streams and number of triggers in an \rtlola specification, respectively.

\subsection{Monitoring RTLola Specifications}\label{sec:monitoring}

Monitoring an \rtlola specification consists of receiving events, evaluating stream expressions, and triggering alerts when necessary. The separation of event-based and periodic streams manifests itself in the monitoring algorithm in that it consists of an event-based and a periodic process. 

The event-based process receives an event and extends streams according to the \emph{evaluation order} $\prec$, \ie, if the stream expression of stream $s$ contains a lookup with target $s'$, then $s' \prec s$. Thus, $s'$ needs to be extended before $s$. The event-based process respects this by successively evaluating streams as soon as the evaluation order permits it.

The periodic process schedules streams according to their frequency. Since all frequencies are determined a priori, we can compute an array of \emph{deadlines}, where deadline $D_i$ is a delay $d_i$ and a set of streams $S_i$ such that when $D_{i-1}$ was due, after $d_i$ seconds, $S_i$ need to be evaluated. The least common multiple of the periods of all periodic streams is the \emph{hyper-period}~$(\hyperperiod{})$ and $\numdl$ denotes the number of deadlines within one hyper-period. Like the event-based process, the periodic process also respects the evaluation order.

An \rtlola specification can be monitored in one of two modes. Offline mode describes a monitoring process that happens after the fact based on log data. It is useful for post-mortem analyses or for validating a specification based on previous system runs. Online mode, however, is the concurrent execution of a system and its monitor. FPGA-based monitoring is especially interesting for the online mode because this mode requires timely processing of events and imposes tighter restrictions on the monitor in terms of available resources. 

The major difference between the two modes in the evaluation process is the source of the current timestamp. In online mode, the value is the system time of the monitor. In offline mode, however, events are annotated with time stamps. The monitor considers the received time stamp to be the current time and checks whether a deadline would have been missed. If so, it first computes all periodic streams affected by the deadline. Afterwards, it processes the event as described before.

\subsection{Sliding Windows}
The evaluation of sliding windows needs special attention. Assume the stream expression of $s$ with frequency \si{x\hertz} contains a sliding window expression such as \lstinline{s$'$.aggregate(over:$
\delta$,using:$\gamma$)} for some duration $\delta$ and aggregation function $\gamma$. 
A naive implementation requires to store all values of $s'$ within the last $\si{\delta\second}$, which is unfeasible because there is no information about the arrival frequency of $s'$. %
If $\gamma\from A^\ast \to B$ is a \emph{list~homomorphism} as defined by Maarten~\cite{pub:2686}, the sliding window can be evaluated accurately with only a finite amount of memory. 
List homomorphisms can be split into four components: a unary $\map \from A \to T$ and finalization $\fin \from T \to B$, an associative binary reduction $\oplus\from T \times T \to T$, and a neutral element $\epsilon$ \wrt $\oplus$.
Assuming $\gamma$ is a list homomorphism, we utilize the fact that sliding windows only occur in periodic streams. 
All new values occurring within a \si{x\second} time interval are effectively equivalent \wrt their arrival time. 
We now apply the \emph{bucketing} approach proposed by Li~\etal~\cite{DBLP:journals/sigmod/LiMTPT05} and split the duration of the window into $\delta x^{-1}$ equal-sized buckets. 
Each bucket stores an intermediate value, initialized with $\epsilon$, and pre-aggregates all values within two evaluations of the window expression using $\oplus$. 
At the time of the evaluation, the intermediate values get reduced to obtain the final value.
  
Fortunately, many commonly used aggregation functions are list homomorphisms, such as summation, minimization, maximization, counting, integration, and averaging.

As an example, consider the following specification:
\begin{lstlisting}
input velo : Float32 
output avg_velo @1Hz := 
  velo.aggregate(over:3s,using:avg)
      .defaults(to:8.0)
\end{lstlisting}
Since the average is a list homomorphism, we define the following concrete components:
\begin{itemize}[itemindent=-10pt]
  \item $\map\from\reals\to\reals\times\naturals$ with $\map(v) \coloneq (v,1)$
  \item $\fin\from\reals\times\naturals\to\reals$ with $\fin(v,c) \coloneq \frac{v}{c}$
  \item $\oplus\from(\reals\times\naturals)^2\to\reals\times\naturals$ with $(v_1,c_1) \oplus (v_2, c_2) \coloneq (v_1 + v_2, c_1 + c_2)$
  \item $\epsilon \coloneq (0,0)$
\end{itemize}

\Cref{fig:tab:avg} details the computation of the average with three buckets. We list the values for all buckets at points in time when either an event arrives or \lstinline{avg_velo} gets computed. Here, $p_1$ represents the ``oldest'' bucket, and $p_3$ the most recent one.

\begin{figure}[t]
\begin{tabular}{r r r r r r r}
  \toprule
	Event & Time & \texttt{velo} & $p_1$ & $p_2$ & $p_3$ & \texttt{avg\_velo} \\
  \midrule
	   & \SI{0.0}{\second} &      & $\epsilon$ & $\epsilon$ & $\epsilon$ & \\
	 1 & \SI{0.5}{\second} & 10.0 & $\epsilon$ & $\epsilon$ & (10.0,1)   & \\
	 2 & \SI{0.6}{\second} & 10.1 & $\epsilon$ & $\epsilon$ & (20.1,2)   & \\
	   & \SI{1.0}{\second} &      & $\epsilon$ & $\epsilon$ & (20.1,2)   & 8.0\\
	   & \SI{2.0}{\second} &      & $\epsilon$ & (20.1,2)   & $\epsilon$ & 8.0\\
	 3 & \SI{2.2}{\second} & 9.9  & (20.1,2)   & $\epsilon$ & (9.9,1)    & \\
	   & \SI{3.0}{\second} &      & (20.1,2)   & $\epsilon$ & (9.9,1)    & 10.0\\
  \bottomrule
\end{tabular}
\caption{Detailed computation of a sliding average.}
\label{fig:tab:avg}
\end{figure}
Initially, all buckets contain the element $\epsilon$.
Upon receiving the first velocity at time stamp \SI{0.5}{\second}, the value of the last bucket is changed to $(0,0) \mathbin{\oplus}\map(10.0) = (10.0,1)$.
When the next event is received at time stamp \SI{0.6}{\second}, we add the value to the same bucket and get $(10.0,1) \mathbin{\oplus}\map(10.1) = (20.1,1)$.
At time stamp \SI{1.0}{\second}, we compute \lstinline{avg_velo} for the first time. Since the current time stamp is less than the length of the window, the default values is used.
Afterwards, we evict the oldest bucket, shift all bucket values to the left, and add a new one with value $\epsilon$.
The same happens at time stamp \SI{2.0}{\second}.
The next event arrives at time stamp \SI{2.2}{\second} and is added to $b_3$. At time \SI{3}{\second}, we stop using the default value and aggregate the buckets. The resulting value is finalized, \ie, $\fin((20.1,2) \mathbin{\oplus} (0,0) \mathbin{\oplus} (9.9,1)) = \frac{30}{3} = 10$

\section{Compilation}\label{sec:compilation}
The hardware realization of an \rtlola specification consists of two modules connected via a first-in-first-out queue as can be seen in \Cref{fig:overalldesign}. 
The \emph{High-level Controller (\HLC)} receives external events consisting of event data for each affected input stream and a time stamp in offline mode, as well as the system time in online mode. 
The \HLC acts as mediator between event-based inputs and periodic deadlines, such that later components in the architecture do not need to distinguish them anymore.
The number of bits the \HLC receives is $\sizets + \sum_{i=1}^\numin(s_i + 1)$ where $\sizets$ and $s_i$ denote the number of bits required to represent a single timestamp and value of input stream $i$, respectively. 
The additional bit per input stream indicates whether the current event contains a new value for the respective stream. 
The \HLC decides whether a periodic deadline is due or an event ought to be evaluated. This decision is based on information about events and the internal system clock.
The respective information is preprocessed with respect to the specification and stored in the \emph{Queue}.
It consists of $\sizeev = (\sum_{i=1}^{\numin} (s_i + 1)) + \sizets + \numout$ bits with the following semantics:
\begin{enumerate}
  \item $\sum_{i=1}^\numin (s_i + 1)$ bits encode an event as explained before. If the signal encodes a deadline, all bits are 0 indicating that no data is available. 
  \item $\sizets$ bits contain the time stamp used for the evaluation of sliding windows and as implicitly defined input stream with name \lstinline{time}.
  \item $\numout$ bits declare for each output stream whether or not they are affected by the current deadline or event. 
\end{enumerate}
The \emph{Low-level Controller (\LLC)} uses this information to manage the evaluation process: all input streams, and output streams which expression can be evaluated immediately are extended first, followed by the remaining output streams in further steps according to the evaluation order. The \LLC also manages updates and the evaluation of sliding windows occurring in output stream expressions.

Due to the lower complexity of \HLC's task, it is capable of receiving events faster than the \LLC can process them. For this reason, the queue acts as a buffer between the two components. While this does not prevent a loss of data when the pressure on the evaluator exceeds its limits for an extended amount of time, it temporarily relieves the stress of a sudden burst of events. Moreover, it cleanly decouples the two components, enabling them to work independently and concurrently at their own pace.
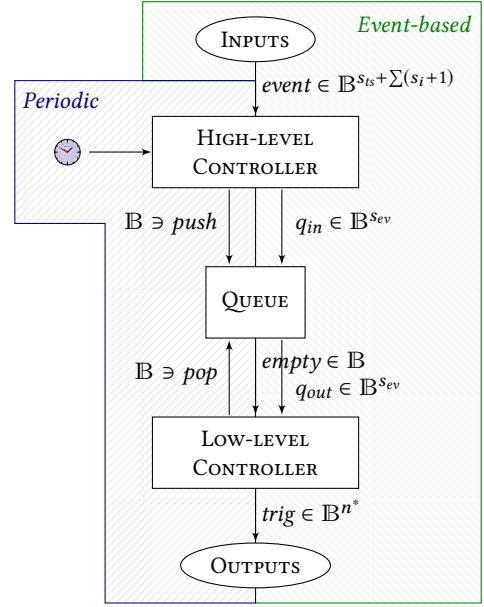
\begin{figure}[t]
\centering
\hspace*{-.8cm}
\begin{tikzpicture}
  \node [draw, ellipse, fill=white] at (0,1.5) (inputs) {\component{Inputs}};
  \node [component] at (0,0) (hlc) {\component{High-level Controller}};
  \node at (-2.5,0) (time) {\clockicon};
  \node [component, text width = 3.5em] at (0,-2) (q) {\component{Queue}};
  \node [component] at (0,-4) (llc) {\component{Low-level Controller}};
  \node [draw, ellipse, fill=white] at (0,-5.5) (outputs) {\component{Outputs}};

  \path [signal] (inputs) -- (hlc);
  \node [nameright] at (1.65,.95) {$\signal{event} \in \bool[\sizets + \sum (s_i + 1)]$};
  \path [signal] (time) -- (hlc);
  
  \path [signal] (-.35,-0.5) -- (-.35,-1.5); 
  \node [nameleft] at (-2.1,-0.95) {\hfill$\bool \ni \signal{push}$};
  \path [signal] (.35,-0.5) -- (.35,-1.5); 
  \node [nameright] at (2.1,-0.95) {$\signal{q\textsubscript{in}} \in \bool[\sizeev]$};

  \path [signal] (-.35,-3.5) -- (-.35,-2.5); 
  \node [nameleft] at (-2.1,-2.95) {\hfill$\bool \ni \signal{pop}$};
  \path [signal] (0,-2.5) -- (0,-3.5); 
  \node [nameright] at (1.65,-2.8) {$\signal{empty} \in \bool$};
  \path [signal] (.35,-2.5) -- (.35,-3.5); 
  \node [nameright] at (2.1,-3.15) {$\signal{q\textsubscript{out}} \in \bool[\sizeev]$};

  \path [signal] (llc) -- (outputs);
  \node [nameright] at (1.65,-4.8) {$\signal{trig} \in \bool[\numtrig]$};
  
  \begin{scope}[on background layer]    
    \path[event] (-1.5,2) -- (-1.5,.95) -- (0,.95) -- (0,-6) -- (3,-6) -- (3,2) -- cycle;
    \node () at (2.1, 1.7) {\sl\textcolor{eventcolor}{Event-based}};
    \path[periodic] (-3.2,.95) -- (0,.95) -- (0,-6) -- (-2,-6) -- (-2,-.95) -- (-3.2,-.95) -- cycle;
    \node () at (-2.6, .7) {\sl\textcolor{periodiccolor}{Periodic}};
  \end{scope}
  
\end{tikzpicture}

\caption{Schematic of an \rtlola monitor composed of two modules connected via a queue. The \emph{High-level Controller} manages the order in which periodic and event-based streams have to be evaluated. The \emph{Low-level Controller} manages the evaluation process of all affected streams.}
\label{fig:overalldesign}
\end{figure}

\subsection{Notation}

We first introduce some notation. The $\circ$ operator denotes bit-con\-ca\-te\-na\-tion. 
$0^n$ denotes an $n$-fold concatenation of $0$-bits. Let $x$ be a bit string of length $n$. 
$x[i]$ denotes the $i$th bit of $x$ assuming $i < n$. 
$x[\ell\dots u]$ is the substring $x[\ell] \circ x[\ell+1] \circ \dots \circ x[u-1]$ for $\ell < u < n$. 
The bounds can be omitted, \ie, $x[\dots u] = x[0\dots u]$ and $x[\ell\dots] = x[\ell\dots n]$. 
Further, let $\clkrate$ be the internal system clock rate and sums over all input streams are abbreviated by omitting the limits, \ie, $\sum s_i = \sum_{1 \leq i}^\numin s_i$.

We distinguish between signals and registers. The former are data lines between components, which we will write in a slanted font, such as $\signal{signal}$. The latter are mere flip-flop components that are updated with a rising clock edge, written in bold face: $\register{register}$.

\subsection{High-level Controller}

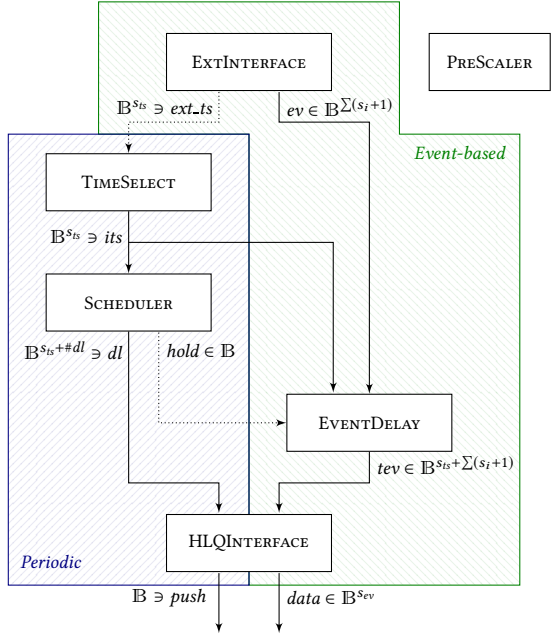
\begin{figure}[t]
  \centering
  \hspace*{-1cm}
  \begin{tikzpicture}[scale=.8, every node/.style={transform shape}]
  
    \node [component, text width=1.8cm] at (6,0)  (prescaler) {\component{PreScaler}};
    \node [component] at (2,0)  (extinter) {\component{ExtInterface}};
    \node [component] at (0,-2) (timeselect) {\component{TimeSelect}};
    \node [component] at (0,-4) (scheduler) {\component{Scheduler}};
    \node [component] at (4,-6) (eventdelay) {\component{EventDelay}};
    \node [component] at (2, -8) (qinter) {\component{HLQInterface}};

    \path [signal, config] (1.5,-.5) -- (1.5,-1) -| (timeselect); 
    \node [nameleft] at (-.2,-.75) {\hfill$\bool[\sizets] \ni \signal{ext\_ts}$};
    \path [signal] (2.5,-.5) -- (2.5,-1) -| (eventdelay); 
    \node [nameright] at (4.2,-.75) {$\signal{ev} \in \bool[\sum (s_i + 1)]$\hfill};
    \path [signal] (0,-3) -| (3.4,-5.5); 
    \path [signal] (timeselect) -- (scheduler);
    \node [nameleft] at (-1.7,-2.9) {\hfill$\bool[\sizets] \ni \signal{its}$};
    \path [signal] (scheduler) |- (1.5,-7) -- (1.5,-7.5);
    \node [nameleft] at (-1.7,-4.8) {$\hfill\bool[\sizets + \numdl] \ni \signal{dl}$};
    \path [signal, config] (.5,-4.5) |- (eventdelay);
    \node [nameright] at (2.2,-4.8) {$\signal{hold} \in \bool$\hfill};
    \path [signal] (eventdelay) |- (2.5,-7) -- (2.5,-7.5);
    \node [nameright] at (5.7,-6.7) {$\signal{tev} \in \bool[\sizets + \sum (s_i + 1)]$\hfill};
    \path [signal] (1.5,-8.5) -- (1.5,-9.5);
    \node [nameleft] at (-.3,-8.9) {\hfill$\bool \ni \signal{push}$};
    \path [signal] (2.5,-8.5) -- (2.5,-9.5);
    \node [nameright] at (4.2,-8.9) {$\signal{data} \in \bool[\sizeev]$\hfill};
    
    \begin{scope}[on background layer]    
      \path[event] (-.5,1) -- (-.5,-1.2) -- (2,-1.2) -- ++(0,-7.5) -- ++(4.5, 0) -- ++(0,7.5) -- ++(-2, 0) -- ++(0, 2.2) -- cycle;
      \node () at (5.5, -1.5) {\sl\textcolor{eventcolor}{Event-based}};
      \path[periodic] (-2,-1.2) --        (2,-1.2) -- ++(0,-7.5) -- ++(-4,0) -- cycle;
      \node () at (-1.3, -8.3) {\sl\textcolor{periodiccolor}{Periodic}};
    \end{scope}

  \end{tikzpicture}
  \caption{Schematic of the High-level Controller receiving external events, managing periodic deadlines, and preparing data for the Low-level Controller.}
  \label{fig:schematic:hlc}
\end{figure}

This module receives external events and schedules periodic tasks. It pre-processes data with respect to the specification and stores the information in the queue.

\Cref{fig:schematic:hlc} shows the schematic of the module. Dotted lines represent signals and components that are only present in the offline mode. 
The \HLC has access to the common system clock \sclk, and two registers \register{avail} and \register{din} which are written by an external entity and contain data of new events.
The components are organized in a pipeline architecture, which ensures that new events can enter the controller before the processing of the previous events has been completed.
The green, top-left-striped part handles the event-based inputs, whereas the blue, top-right striped part handles periodic deadlines. The \component{HLQInterface} then unifies events and deadlines.

\paragraph*{\component{PreScaler}} 
This component scales the system clock \sclk down by a constant factor to the \HLC-internal \hclk clock. \hclk drives the \component{Scheduler}, \component{EventDelay}, and the \component{ExtInterface}. The \component{PreScaler} also provides an internal clock for the \component{HLQInterface}, which ticks twice as fast as \hclk and slower than \sclk. For a cleaner illustration, \Cref{fig:schematic:hlc} does not include the respective data lines, as well as \signal{valid} bits accompanying every data line with width greater than 1 indicating the presence of meaningful data on the wire. 

\paragraph*{\component{ExtInterface}} 
This component handles the communication with external input sources. 
The external source writes a 1-bit latch \register{avail} when new input data is available in the \register{din} register. 
In online mode, the \component{ExtInterface} reads \register{din}, and forwards it to the \component{EventDelay}. 
In offline mode, the input event also contains a time stamp, which the \component{ExtInterface} extracts and forwards to the \component{TimeSelect} component. 
In both modes, it then  clears \register{avail}, indicating that the next event can be received.

Formally, \component{ExtInterface} waits on \hclk and behaves as follows, where \signal{ev} carries the event data, \signal{ext\_ts} is the external time stamp received with the event, and \signal{valid\_ext\_ts} indicates whether there is new and valid data on the \signal{ext\_ts} wire.
\begin{align*}
  \signal{ev}^0 &= 0^{\sum s_i} \\
  \signal{ev}^{t+1} &= \twopartdef{\register{din}^t[\sizets\dots]}{\register{avail}^t}{0^{\sum s_i}} \\
  \register{avail}^0 &= 0 \\
  \register{avail}^{t+1} &= \twopartdef{1}{\external^t \land \neg \register{avail}^t}{0} \\
  \signal{ext\_ts}^0 &= 0^\sizets \\
  \signal{ext\_ts}^{t+1} &= \twopartdef{\register{din}^t[\dots\sizets]}{\register{avail}^t}{0^\sizets} \\
  \signal{valid\_ext\_ts}^0 &= \signal{valid\_ev}^0 = 0 \\
  \signal{valid\_ext\_ts}^{t+1} &= \signal{valid\_ev}^{t+1} = \register{avail}^t
\end{align*}
Here, $\external$ is an oracle indicating a change depending on an external event.

\paragraph*{\component{TimeSelect}} 
The component waits on the system clock and computes the internal time stamp $\signal{its}$. 
In offline mode, this is simply the time stamp formerly extracted from the input event. 
Thus, this component boils down to a simple wire and does not introduce any delay in the signal. 
In online mode, however, this component computes the time that has passed so far by repeatedly adding the period $\clkperiod$ of the system clock.
This component uses an internal register \register{reg\_its} mirroring the value of \signal{its}. It persists the value of the signal without introducing a delay\footnote{This can be achieved by letting the input wire of the register carry the same signal as the output wire.}.
\begin{align*}
  \register{reg\_its}^0 &= 0^{\sizets} \\
  \register{reg\_its}^{t+1} &= \register{reg\_its}^t + \clkperiod = (t+1) * \clkperiod \\
  \signal{its}^t &= \register{reg\_its}^t \\
  \signal{valid\_its}^t &= 1
\end{align*}

\paragraph*{\component{Scheduler}}
This component inspects the current internal timestamp \signal{its} and detects when a periodic stream is due. 
It first determines the start time and stores it in the \register{period} register: in online mode that is simply $0^\sizets$, whereas in offline mode this is the first time stamp received from the external source. 
It then maintains the invariant that \register{period} contains the least time stamp in the current hyper-period. 
If, for example, the specification contains two periodic streams with frequency $\SI{2}{\hertz}$ and $\SI{5}{\hertz}$, then the hyper-period is $\SI{1}{\second}$. 
If the first received event carries the timestamp $\SI{3.4}{\second}$, \register{period} remains $\SI{3.4}{\second}$ until a time stamp greater than or equal to $\SI{3.3}{\second} + \hyperperiod{} = \SI{4.4}{\second}$ is received. 
In this case, it jumps to $\SI{4.4}{\second}$.
As a result, the difference between \signal{its} and \register{period} represents the time within the current hyper-period.

The register \register{did} contains the id of the current deadline, \ie, the deadline that needs to be evaluated next, in unary encoding.
The encoding is a trade-off: a binary encoding requires fewer registers and wires but also two decoders, one in the \component{Scheduler} and one in the \component{HLQInterface}. 
The \register{did} register is initialized with $0^\numdl$, which is an invalid unary number and indicates that the \component{Scheduler} has not been initialized, \ie, it did not receive a start time, yet. 
The initialization takes place in the first cycle in online mode, or in the first cycle with enabled \signal{valid\_ext\_ts} bit in offline mode.

Lastly, the \signal{prog}(ress) signal indicates whether a new deadline is due. 
It checks whether the \component{Scheduler} was initialized and whether the position in the current hyper-period exceeds the current deadline. 
For this check, it accesses the statically determined array of deadline offsets as described in \Cref{sec:monitoring}. 
The lookup consists of conjoining each element of the array with the respective bit of the \register{did} and bitwise disjoining all results:
$\deadline(\register{did}) = \bigvee_{i=1}^\numdl \deadline_i \land \register{did}[i]$

In the following definitions, a subscript $\mathit{off}$ ($\mathit{on}$) indicates the offline (online) version of the register or signal. Usages without subscript use the respective version.
\begin{align*}
  \signal{init}_{\mathit{off}^0} &= 0 \\
  \signal{init}_{\mathit{off}}^{t+1} &= \signal{valid\_its}^{t+1} \land (\register{did}^{t} = 0^{\numdl}) \\
  \signal{init}_{\mathit{on}}^{t} &= \twopartdef{1}{t = 1}{0} \\
  \register{did}^0 &= 0^\numdl \\
  \register{did}^{t+1} &= \threepartdef%
      {10^{\numdl-1}}%
      {\signal{init}^{t+1}}%
      {\csr(\register{did}^{t})}%
      {\neg\signal{init}^{t+1} \land \signal{prog}^{t+1}}%
      {\register{did}^{t}}
\end{align*}
\begin{align*}
  \register{period}^0 &= 0^\numdl \\
  \register{period}_{\mathit{off}}^{t+1} &= \threepartdef%
      {\signal{its}^{t+1}}%
      {\signal{init}^{t+1}}%
      {\register{period}_{\mathit{off}}^{t} + \hyperperiod}%
      {\register{did}^{t} = 0^{\numdl + 1}1 \land \signal{prog}^{t+1}}%
      {\register{period}_{\mathit{off}}^{t}} \\
  \register{period}_{\mathit{on}}^{t+1} &= \threepartdef%
      {0}%
      {\signal{init}^{t+1}}%
      {\register{period}_{\mathit{on}}^{t} + \hyperperiod}%
      {\register{did}^{t} = 0^{\numdl + 1}1 \land \signal{prog}^{t+1}}%
      {\register{period}_{\mathit{on}}^{t}}\\%
  \signal{prog}^{t+1} &= \register{did}^{t} \neq 0^\numdl \land (\signal{its}^{t+1} - \signal{period}^{t}) > \deadline(\register{did}^{t}) \\ 
\end{align*}

Here, $\csr$ is a 1-bit cyclic shift to the right. The output signals are thus defined as:\\
\begin{minipage}[t]{.45\columnwidth}
\begin{align*}
  \signal{hold}_{\mathit{on}}^t &= 0 \\
  \signal{dl}^t &= \signal{its}^t \circ \register{did}^t
\end{align*}
\end{minipage}
\begin{minipage}[t]{.45\columnwidth}
\begin{align*}
  \signal{hold}_{\mathit{off}}^t &= \signal{prog}^t \\
  \signal{valid\_dl}^t &= \lnot \signal{prog}^t
\end{align*}
\end{minipage}

\paragraph*{\component{EventDelay}} 
This component composes the internal time stamp and the current event. The time stamp is later used in the evaluation process. In online mode, the compound signal is then passed to the \component{HLQInterface} without delaying the signal.

In offline mode, however, the \component{EventDelay} needs to take the \signal{hold} signal into account. 
To compensate for the delay introduced by the \component{Scheduler}, the compound signal is delayed by one cycle. 
Afterwards, the data is delayed further until \signal{hold} turns off. 
During the hold period, new events can be received and need to be stalled. We discuss this issue below. 

Formally, the component waits on $\hclk$ and uses two internal registers, \register{data} which introduces the mandatory one-cycle delay and \register{reg\_tev} mirroring the signal \signal{tev}.
\begin{align*}
  \register{data}^0 &= 0^{1+\sizets+\sum (s_i + 1)} \\
  \register{data}^{t+1} &= \twopartdef{\register{data}^t}{\signal{hold}^{t+1}}{\signal{valid\_ev}^{t+1} \circ \signal{its}^{t+1} \circ \signal{ev}^{t+1}} \\
  \register{stalled}^0 &= 0^{1+\sizets+\sum (s_i + 1)} \\
  \register{stalled}^{t+1} &= \twopartdef{\register{stalled}^t}{\signal{hold}^{t+1}}{\register{data}^t} \\
  \signal{tev}^t &= \signal{stalled}^t[1\dots] \\
  \signal{valid\_tev}^{t+1} &= \neg\signal{hold}^{t+1} \land \signal{tev}^t[0]
\end{align*}
Note that \signal{ev} and \signal{its} are always valid at the same point in time, so we can verify the invariant 
\[ \forall t\colon \signal{valid\_ev}^t \iff \signal{valid\_its}^t \]

\paragraph*{QInterface}
This component accepts data from the \component{EventDelay} and the \component{Scheduler} and forwards information to the queue. 
It can only push one data packet per cycle to the queue. 
Both in offline and online mode, however, it can receive a deadline and an event at the same time. 
For this reason, this component is clocked twice as fast as \hclk. This enables it to wait on events in even cycles and wait on deadlines in odd cycles. Yet, it needs to be slower than \sclk such that the queue can still process both data packets in time.
As a result, it grants precedence to events. This is desired to compensate for the delay introduced by the \component{EventDelay} and preserve the correct order of events and deadlines.

Formally, in even cycles this component computes:
\begin{align*}
  \signal{push}^t &= \signal{valid\_ev}^t \\
  \signal{data}^t &= \signal{ev}^t \circ \bigvee_{i=1}^\numin (\dep(i) \land \signal{ev}^t[\sum_{j=1}^i (s_j + 1) - 1])
\end{align*}
Here, $\dep$ is another static array of $\numout$ bit wide registers where each bit represents a dependency between streams. I.e., if $\dep(i)[j]$ is on, output stream $j$ transitively depends on input stream $i$ and thus has to be evaluated with the current event. The respective dependencies are conjoined with $\signal{ev}[\sum_{j=1}^i (s_j + 1) - 1]$, \ie, the bit indicating whether the current event carries a new value for input stream $i$. Overall, the data sent to the queue thus contains the event data, the time stamp of the event, and one bit per stream indicating whether the stream will be evaluated.

In odd cycles, the $\signal{data}$ signal only contains the streams affected by the deadline:
\begin{align*}
  \signal{push}^t &= \signal{valid\_dl}^t \\
  \signal{data}^t &= 0^{\sum (s_i + 1)} \circ \signal{dl}^t[\dots \sizets] \circ \dltarget(\signal{dl}^t[\sizets\dots])
\end{align*}

\subsection{Input Buffering}
The stalling mechanism in the \component{EventDelay} and \component{Scheduler} is only necessary in offline mode. 
Two consecutive events $e_i$ and $e_{i+1}$ can have time stamps that skip several deadlines. 
In this case, the \component{Scheduler} repeatedly considers $e_{i+1}$ as a new value and triggers the computation of a deadline until no more deadline is due. 
During this time, it raises the \signal{hold} flag, so that the \component{EventDelay} stalls $e_{i+1}$ before sending it to the \component{HLQInterface}. 
While stalling, the \component{ExtInterface} can continue receiving events that are either lost, or override $e_{i+1}$. 
To prevent this, we add an input buffer of size \buffsize in front of the \component{Scheduler} and \component{EventDelay}. 
The required buffer size can be computed based on the input data. 
Assume that the \HLC receives a new input value every $\delta$ \hclk cycles.
The \emph{backlog} $\backlog(e_i)$ describes how many cycles it takes to fully process all entries currently in the buffer when receiving event $e_i$, including all deadlines induced by $e_i$.
\begin{align*}
  \backlog(e_1) &= 0 \\
  \backlog(e_{i+1}) &= \backlog(e_i) - \min\{\backlog(e_{i}), \delta - 1\} + \dld(e_{i+1})
\end{align*}
Here, $\dld(e_i)$ is the number of periodic deadlines that become due when receiving $e_i$. 
Intuitively, between event $e_i$ and $e_{i+1}$, $\delta-1$ cycles pass without a new event, so we either process $\delta-1$ deadlines or events, or all entries in the buffer. 
Upon receiving $e_{i+1}$, we need to process an additional $\dld(e_i)$ deadlines plus the new event. 
At the same time, another cycle passes, so we can immediately process one event or deadline. 
This effectively eliminates the incoming event, so only $\dld(e_i)$ needs to be taken into account.
  
Let \buffer be a buffer of size \buffsize with the following semantics, where $\buffer^\eta_i$ is the $i$th entry of $\buffer$ at cycle $\eta$:
\begin{align*}
  \buffer^0 &= \{ \bot \}^\buffsize \\
  \buffer^{\eta+1} &= \fourpartdef{%
      \buffer^\eta \shift 1
    }{%
      \neg \signal{hold}^{\eta + 1} \land \neg \signal{valid\_its}^{\eta+1}
    }{%
      \buffer^\eta
    }{%
      \hphantom{\neg} \signal{hold}^{\eta + 1} \land \neg \signal{valid\_its}^{\eta+1}
    }{%
      \buffer^\eta \oplus \signal{its}^{\eta + 1}
    }{%
      \hphantom{\neg} \signal{hold}^{\eta + 1} \land \hphantom{\neg} \signal{valid\_its}^{\eta+1}
    }{%
      (\buffer^\eta \shift 1) \extbuffer \signal{its}^{\eta + 1}
    }{%
      \neg \signal{hold}^{\eta + 1} \land \hphantom{\neg} \signal{valid\_its}^{\eta+1}
    }
\end{align*}
Here, $\buffer\shift 1$ shifts the entire buffer content to the left, \ie, the first and thus oldest entry gets evicted, the $n+1$st entry becomes the $n$th, and the last entry becomes $\bot$. 
$\buffer\extbuffer\nu$ denotes that the first free entry of \buffer, \ie, the first $k$ with $\buffer_k = \bot$, is replaced by $\nu$. 
If no such entry exists, the buffer overflows. 
Formally, the theorem states the following:
\begin{theorem}\label{thm:buffsize}
  If the buffer size $\buffsize$ maximizes $\backlog$, the buffer will never overflow: 
  \[
    \buffsize \geq \max \{\backlog\} \implies \forall \eta\colon \neg \signal{valid\_its}^\eta \lor \neg \signal{hold}^\eta \lor \buffer^\eta_\buffsize = \bot
  \]
\end{theorem}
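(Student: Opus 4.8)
The plan is to turn the statement into a post-state invariant of the buffer recurrence and prove it by induction, using the arithmetic of $\backlog$ only at the very end. First I would record the structural invariant I call \emph{packedness}: at every cycle the non-$\bot$ entries of $\buffer^\eta$ occupy a prefix, so that $\buffer^\eta_\buffsize = \bot$ holds exactly when $\size(\buffer^\eta) < \buffsize$. This is a routine induction over the four cases: $\buffer \shift 1$ evicts the oldest entry and appends $\bot$ at the tail, and $\buffer \extbuffer \nu$ fills the first free slot, so both operations preserve a $\bot$-suffix. With packedness available I would then reduce the claim to a single dangerous case. An overflow can only be caused by the \emph{unshifted} add in the third case ($\signal{hold} \land \signal{valid\_its}$): the fourth case performs $\shift 1$ before the add and hence always frees the tail, while the first two cases add nothing. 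Since $\buffer^\eta$ is defined from the control signals indexed by $\eta$, the guard $\signal{valid\_its}^\eta \land \signal{hold}^\eta$ in the conclusion is precisely the trigger of the case-three transition that produces $\buffer^\eta$, so the disjunction to be proved is exactly the invariant ``every unshifted add leaves the tail slot free.''

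Second, I would prove the quantitative lemma that $\size(\buffer^\eta) \le \max\{\backlog\}$ for every cycle $\eta$, which is where the event-level recurrence enters. The idea is to partition the run into maximal \emph{hold phases} and drain phases: during a hold phase the buffer can only grow, by exactly one whenever a fresh timestamp arrives (case three), whereas outside a hold phase every cycle performs a $\shift$ (cases one and four) and drains one entry, clamped at zero. Matching this to the recurrence for $\backlog$, the subtracted term $\min\{\backlog(e_i),\delta-1\}$ models the $\delta-1$ drain cycles between consecutive events with the clamp that occupancy cannot go negative, and the added term $\dld(e_{i+1})$ accounts for the deadlines that reopen a hold phase when $e_{i+1}$ arrives. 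I would then show by induction on the event index that the buffer occupancy immediately after $e_i$ and its induced deadlines is dominated by $\backlog(e_i)$, and that occupancy is non-increasing between events; hence the supremum of $\size(\buffer^\eta)$ over all cycles is attained at an event arrival and is at most $\max\{\backlog\}$.

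The main obstacle I expect is exactly this correspondence: formally aligning the per-cycle transitions, driven by $\signal{hold}$ and $\signal{valid\_its}$, with the per-event recurrence, which abstracts away the individual cycles of a hold phase and instead summarizes them through $\dld$. In particular I must reconcile the fact that $\backlog$ is measured in \emph{cycles} needed to clear the backlog, including cycles spent on deadlines that never occupy a buffer slot, with the fact that $\size(\buffer^\eta)$ counts only buffered \emph{events}. This gap is, fortunately, what supplies the one-slot margin needed for the \emph{strict} conclusion: the case-three add fires only while $\signal{hold}$ is asserted, which by construction of the \component{Scheduler} means at least one deadline is still pending, and that pending deadline contributes to $\backlog$ but not to buffer occupancy. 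Thus $\size(\buffer^\eta) \le \max\{\backlog\} - 1 \le \buffsize - 1$ at the post-state of every unshifted add, so by packedness $\buffer^\eta_\buffsize = \bot$; the $\extbuffer$ therefore finds a free entry, does not overflow, and leaves the tail free, which is precisely the claimed disjunction.
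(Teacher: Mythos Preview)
Your outline is sound and arrives at the same theorem, but the decomposition differs from the paper's. The paper does not work with the concrete occupancy $\size(\buffer^\eta)$ directly; instead it introduces an \emph{abstract buffer} $\abuffer$ whose $i$th cell stores the number of deadlines still pending for the $i$th concrete entry, with $\size(\abuffer^\eta_i)=\abuffer^\eta_i+1$. The proof then proceeds in three facts: (1) $\backlog(e_i)=\sum_j\size(\abuffer^{\eta_i}_j)$ by forward induction on events; (2) the abstract buffer never overflows, shown by contradiction via a look-back over the preceding $\delta$ cycles (if the last cell were occupied and the head still positive, the head $\delta$ cycles earlier would already force $\backlog(e_i)\ge\buffsize+1$); (3) a slot-by-slot correspondence $\abuffer^\eta_i=\bot\iff\buffer^\eta_i=\bot$. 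Your packedness lemma and reduction to case three play the role of fact~(3), and your ``pending deadline supplies the margin'' is exactly the semantic content of the disjunct $\abuffer^{\eta-1}_1=0$ in fact~(2).

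What each buys: the paper's abstract buffer makes the invariant you need explicit---your per-event induction on $\size(\buffer^{\eta_i})\le\backlog(e_i)$ cannot be closed without simultaneously tracking the aggregate pending-deadline count, because the number of drain cycles between $e_i$ and $e_{i+1}$ depends on how many hold cycles intervene, and that is precisely $\sum_j\abuffer_j$. So your phase-partition argument, once made rigorous, carries the summed version of the paper's per-slot data. Conversely, your forward ``margin from the pending deadline'' avoids the paper's backward contradiction step and is arguably more direct once the right quantity is in hand. Either route works; the paper's explicit bridge object just front-loads the bookkeeping you identify as the main obstacle.
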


\begin{proof}
  We define an abstract buffer \abuffer where each abstract entry corresponds to a concrete one in \buffer. 
  Its value states how many clock cycles are required to process the deadlines induces by the respective concrete entry if it were the first one.
  \begin{align*}
    \abuffer^0 &= \{ \bot \}^\buffsize \\
    \abuffer^{\eta+1} &= \fourpartdef{
        \decbuffer(\abuffer^{\eta})
      }{
        \abuffer^{\eta}_1 > 0 \land \neg \signal{valid\_its}^{\eta +1}
      }{
        \decbuffer(\abuffer^\eta) \oplus \dld(\signal{its}^{n+1})
      }{
        \abuffer^{\eta}_1 > 0 \land \hphantom{\neg} \signal{valid\_its}^{\eta +1}
      }{
        \abuffer^\eta <<1}{\abuffer^{\eta}_1 = 0 \land \neg \signal{valid\_its}^{\eta +1}
      }{
        (\abuffer \shift 1) \extbuffer \dld(\signal{its}^{\eta+1})
      }{
        \abuffer^{\eta}_1 = 0 \land \hphantom{\neg} \signal{valid\_its}^{\eta +1}
      }
  \end{align*}
  Here, $\decbuffer(\abuffer)$ reduces the value of the first and thus oldest value by one, which represents that a deadline induced by the event was processed.
  We define the size of an entry in $\abuffer$ as 
  \[
    \size(\abuffer^\eta_i) = \twopartdef{0}{\abuffer^\eta_i = \bot}{\abuffer^\eta_i + 1}
  \]

The proof follows from three facts. 

\emph{1) $\backlog$ is the sum of the size of $\abuffer$'s entries}, \ie, for any event $e_i$ that reaches the buffer in cycle $\eta_i$, the following holds:
\begin{equation}\label{lem:backlogabuff}
  \backlog(e_i) = \sum_{j=1}^\buffsize \size(\abuffer^{\eta_i})
\end{equation}

Proof by induction on the event sequence consisting of the events $e_1,e_2,\dots$. Assume $\eta_i$ is the clock cycle in which $e_i$ arrives at the buffer. For $\eta_0$: 
\[
  \sum_{j=1}^\buffsize \size(\abuffer^{\eta_0}_0) = \sum_{j=1}^\buffsize 0 = 0 = \backlog(e_0)
\]
In the induction step, we go from $\eta_i$ to $\eta_{i+1}$. 
Note that these two points in time are separated by $\delta$ clock cycles, \ie, $\eta_{i+1} = \eta_i+\delta$.
In each of these steps, no new value arrives at the buffer, so $\forall j \in \{1, \dots, \delta-1\}\colon \neg \signal{valid\_its}^{\eta_i+j}$. 
Thus, by definition of $\abuffer$, the sum of the abstract entries always decreases by 1 for each \hclk cycle unless the buffer is already empty. 
In this case, the values does not change.
In cycle $\eta_{i+1}$, however, the buffer additionally receives a new value, so the sum of the entries also increases by $\dld(e_{i+1}) + 1$. 
Formally:
\begin{align*}
  \sum_{j=0}^\buffsize &\size(\abuffer^{\eta_{i+1}}) \\
  &= \sum_{j=1}^\buffsize \size(\abuffer^{\eta_{i+1}-1}) + (\dld(e_{i+1}) + 1) - 1 \\
  &= \sum_{j=1}^\buffsize \size(\abuffer^{\eta_{i}}) - \min\{\sum_{j=1}^\buffsize\size(\abuffer^{\eta_i}_j), \delta - 1\} + \dld(e_{i+1}) \\
  &= \backlog(e_i) - \min\{ \backlog(e_i), \delta - 1 \} + \dld(e_{i+1}) \tag{IH}\\
  &= \backlog(e_{i+1})
\end{align*}

The next fact can be proven using \Cref{lem:backlogabuff}:

\emph{2) The abstract buffer cannot overflow}, more concretely:
\begin{equation}\label{lem:abstractoverflow}
    \buffsize \geq \max \{\backlog\} \implies \forall \eta\colon \neg \signal{valid\_its}^\eta \lor \abuffer^{\eta-1}_1 = 0 \lor \abuffer^{\eta-1}_\buffsize = \bot
\end{equation}
Assume $\buffsize \geq \max\{\backlog\}$ and $\signal{valid\_its}^\eta \land \abuffer^{\eta-1}_1 > 0 \land \abuffer^{\eta-1}_\buffsize \neq \bot$. Since $\signal{valid\_its}^\eta$, we know that a new event arrived. 
If it is the first event, \ie, $\eta = \eta_0$, the contradiction follows from the definition of $\abuffer$. 
Otherwise, let $\eta = \eta_{i+1}$. 
We inspect the last $\delta$ steps. We know that no new value arrived, and because $\abuffer^{\eta-1}_\buffsize \neq \bot$ holds, there was no shift.
\begin{equation}\label{eq:deltadiff}
  \abuffer^{\eta_{i+1}-1-\delta}_1 = \abuffer^{\eta_i-1}_1 = \delta + \abuffer^{\eta_{i+1}-1}_1 \geq \delta+1 
\end{equation}
As a result:
\begin{align*}
  \backlog(e_i) 
    &= \sum_{j=1}^\buffsize\size(\abuffer^{\eta_i}) \tag{\cref{lem:backlogabuff}}\\
    &\geq \sum_{j=2}^\buffsize \size(\abuffer^{\eta_1}) + \delta + 1 \tag{\cref{eq:deltadiff}}\\
    &\geq \delta + 1 + \buffsize - 1 \tag{$\delta$>0}\\
    &\geq \buffsize+1 
\end{align*}
This contradicts $\buffsize \geq \max\{\backlog\}$.

\emph{3) Each entry of the abstract buffer corresponds to an entry in the concrete buffer.}
\begin{equation}\label{lem:abstractconcrete}
  \forall \eta,i\colon \abuffer^\eta_i = \bot \iff \buffer^\eta_i = \bot
\end{equation}
The equation holds by the definitions of $\abuffer$ and $\buffer$. The proof itself consists of correct bookkeeping of the buffer states and respective signal values.

By \Cref{lem:abstractconcrete} we know that each empty entry in the abstract buffer is also empty in the concrete buffer. Moreover, \Cref{lem:abstractoverflow} verifies that the abstract buffer never overflows. Thus, the concrete buffer cannot overflow as well, concluding the proof.

\end{proof}

\subsection{Low-level Controller (\LLC)}\label{sec:llc}
\begin{figure}[t]
  \centering
  \begin{tikzpicture}[scale=.78, every node/.style={transform shape}]

    \node [component] at (0,0)  (qif) {\component{LLQInterface}};
    \node [component] at (4.5,0) (ec) {\component{EvalController}};

    \path [signal] (4.5,1.5) -- (ec); 
    \node [nameright, text width=1.6cm] at (5.4,.8) {$\signal{d\_in} \in \bool[\sizeev]$\hfill};

    \path [signal] (.5,1.5) -- (.5,.5); 
    \node [nameright] at (2.2,.8) {$\signal{empty} \in \bool$\hfill};

    \path [signal] (-.5,.5) -- (-.5,1.5); 
    \node [nameleft, text width=1.2cm] at (-1.2,.8) {\hfill$\bool \ni \signal{pop}$};

    \path [signal, <->] (qif) -- (ec); 
    \node [nameright] at (3.4,.3) {$\signal{een} \in \bool$\hfill};

  \end{tikzpicture}
  \captionsetup{width=0.95\columnwidth} 
  \caption{Schematic of the Low-level Controller receiving event and deadline information from the queue and evaluating streams accordingly.}
  \label{fig:schematic:llc}
\end{figure}
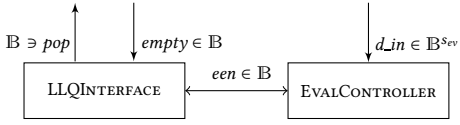
\begin{figure}[t]
\vspace{-.3cm}
  \begin{minipage}{.48\columnwidth}
    \begin{subfigure}{.9\textwidth}
      \centering
      \begin{tikzpicture}[scale=.8, every node/.style={transform shape}]

          \node [state, initial above] (idle)                     {idle};
          \node [state] (pop)  [right=2cm of idle] {pop};
          \node [state] (eval) [below=2cm of pop]  {eval};

          \draw[thick, transition] (idle) edge              node [above]         {$\neg \signal{empty}$} (pop);
          \draw[thick, transition] (pop)  edge [bend right] node [right]         {$\top$}                (eval);
          \draw[thick, transition] (eval) edge [bend right] node [above, sloped] {$\neg\signal{empty}$}  (pop);
          \draw[thick, transition] (eval) edge              node [below, sloped] {$\signal{empty}$}      (idle);
        
      \end{tikzpicture}
      \caption{The \component{LLQInterface} handles the the communication with the queue in \smstate{pop}, and waits in \smstate{eval} until the evaluation finished.}
      \label{fig:statemachine:llqif}
    \end{subfigure}
  \end{minipage}
  \hfill
  \begin{minipage}{.48\columnwidth}
    \begin{subfigure}{.9\textwidth}
      \centering
      \begin{tikzpicture}[scale=.8, every node/.style={transform shape}]

          \node [state, initial above]              (idle)                             {idle};
          \node [state]              (one)     [right=2.5cm of idle]    {$1$};
          \node [state]              (twoone)  [below=2cm of one]       {$2.1$};
          \node [minimum height=1cm] (twodots) [left=.955cm of twoone]  {\dots};
          \node [state]              (twol)    [left=.955cm of twodots] {$2.\ell$};
        
          \path [thick, transition] (idle)    to node [above, sloped] {\signal{een}} (one);
          \draw [thick, transition] (one)     to node [above, sloped] {\signal{done\textsubscript{1}}} (twoone);
          \draw [thick, transition] (twoone)  to node [above, sloped] {\signal{done\textsubscript{2.1}}} (twodots);
          \draw [thick, transition] (twodots) to node [above, sloped] {\signal{done\textsubscript{2.($\ell$-1)}}} (twol);
          \draw [thick, transition] (twol)    to node [above, sloped] {\signal{done\textsubscript{2.$\ell$}}} (idle);
        
      \end{tikzpicture}
      \caption{The \component{EvalController} manages the evaluation. State~\smstate{1} treats input streams, \smstate{2.1} through \smstate{2.$\lambda$} output streams according to the evaluation order.}
      \label{fig:statemachine:ec}
    \end{subfigure}
  \end{minipage}
  \caption{State machines for the \component{LLQInterface} and the \component{EvalController}} 
  \label{fig:statemachines}
\end{figure}
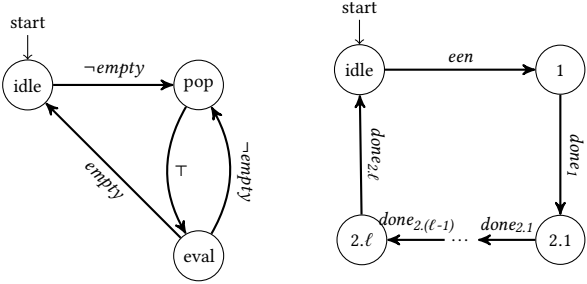
\begin{figure}[t]
  \vspace{-.3cm}
  \centering
  \begin{tikzpicture}[scale=.8, every node/.style={transform shape}]

    \node [component] at (0,0)  (ini) {\component{in\textsubscript{i}}};
    \node [component] at (4,0)  (outj) {\component{out\textsubscript{j}}};
    \node [component] at (8,0)  (wetwa) {\component{w\textsubscript{$\eta$}}};

    \path [signal] (-.5,1.5) -- (-.5,.5); 
    \node [namecenter] at (-.6,1.6) {$\signal{upd} \in \bool$};
    \path [signal] (.5, 1.5) -- (.5, .5); 
    \node [namecenter] at (.2, 1.6) {$\signal{d\textsubscript{in}} \in \bool[s_i]$};

    \path [signal] (-.5,-.5) -- (-.5,-1.5); 
    \node [namecenter] at (-.2,-1.3) {$\signal{done} \in \bool$};
    \path [signal] (.5, -.5) -- (.5, -1.5); 
    \node [namecenter] at (.8,-1.3) {$\signal{d\textsubscript{out}} \in \bool[s_{\mathit{out}_i}]$};

    \path [signal] (3.1,1.5) -- (3.1, .5); 
    \node [namecenter] at (2.9,1.7) {$\signal{pe} \in \bool$};
    \path [signal] (3.7,1.5) -- (3.7, .5); 
    \node [namecenter] at (3.5,1.7) {$\signal{eval} \in \bool$};
    \path [signal] (4.3,1.5) -- (4.3, .5); 
    \node [namecenter] at (4.1,1.7) {$\signal{w\textsubscript{in}} \in \bool[\wdep(j)]$};
    \path [signal] (4.9,1.5) -- (4.9, .5); 
    \node [namecenter] at (4.7,1.7) {$\signal{dep\textsubscript{in}} \in \bool[\dep(j)]$};

    \path [signal] (3.5, -.5) -- (3.5,-1.5); 
    \node [namecenter] at (3.8,-1.3) {$\signal{done} \in \bool$};
    \path [signal] (4.5, -.5) -- (4.5,-1.5); 
    \node [namecenter] at (4.8,-1.3) {$\signal{d\textsubscript{out}} \in \bool[s_{\mathit{out}_j}]$};

    \path [signal] (7.1,1.5) -- (7.1, .5); 
    \node [namecenter] at (6.9,1.7) {$\signal{evict} \in \bool$};
    \path [signal] (7.7,1.5) -- (7.7, .5); 
    \node [namecenter] at (7.5,1.7) {$\signal{upd} \in \bool$};
    \path [signal] (8.3,1.5) -- (8.3, .5); 
    \node [namecenter] at (8.1,1.7) {$\signal{req} \in \bool$};
    \path [signal] (8.9,1.5) -- (8.9, .5); 
    \node [namecenter] at (8.7,1.7) {$\signal{d\textsubscript{in}} \in \bool[s_{\mathit{in}_\eta}]$};

    \path [signal] (7.5, -.5) -- (7.5,-1.5); 
    \node [namecenter] at (7.8,-1.3) {$\signal{done} \in \bool$};
    \path [signal] (8.5, -.5) -- (8.5,-1.5); 
    \node [namecenter] at (8.8,-1.3) {$\signal{d\textsubscript{out}} \in \bool[s_{\mathit{out}_\eta}]$};
    
  \end{tikzpicture}
  \caption{Input and output signals of input and output streams.}
  \label{fig:schematic:innercomps}
\end{figure}
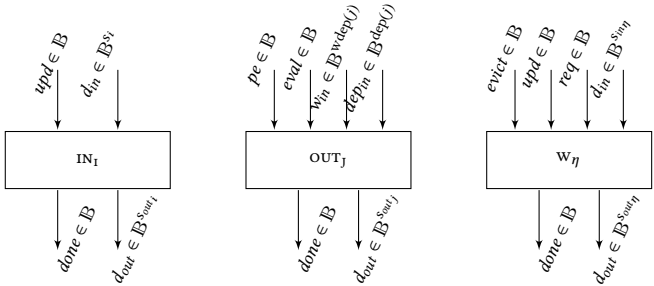

The \LLC receives elements from the queue and evaluates streams according to the information received. After the evaluation, it checks for violated properties and triggers an alarm if appropriate. 

As can be seen in \Cref{fig:schematic:llc}, it consists of a \component{LLQInterface} component which communicates with the queue and triggers an evaluation process taking place in the \component{EvalController}.

\paragraph*{LLQInterface}
This component consists of a three-state machine depicted in \Cref{fig:statemachine:llqif}. 
In the \smstate{idle} state, it waits on new inputs from the queue. 
On a falling edge of \signal{empty}, it transitions into the \smstate{pop} state, rising the \signal{pop} signal for one \sclk cycle. 
At the end of this cycle, it unconditionally transitions to \smstate{eval}, setting the \emph{evaluation enable} \signal{een} latch. 
This signals the \component{EvalController} that valid data is on the \signal{d\textsubscript{in}} wire, so an evaluation can be started.
After the evaluation is completed, \component{EvalController} clears the \signal{een} signal. 
Depending on the current queue state, it transitions back to \smstate{idle} or \smstate{pop}.

\paragraph*{EvalController}

This component is a state machine as depicted in \Cref{fig:statemachine:ec} with $\ell+2$ states where $\ell = \max(\ell \in \mathds{N} | \exists s_1 \dots s_\ell\colon s_1 \prec \dots \prec s_\ell)$ is the number of layers of the evaluation order (see \Cref{fig:statemachine:ec}). 
In addition to the state machine, there are $\numin/\numout/\numwind$ input/output/window components. 
In the following, components and signals indexed with $i, j, \eta$ refer to inputs, outputs, and windows, respectively.

In the \smstate{idle} state, the \component{EvalController} waits on a rising edge of $\signal{een}$, on which it transitions to state \smstate{1}. 
This state corresponds to a so-called \emph{pseudo-extension} phase, where all output streams that get a new value in this evaluation cycle are extended by a pseudo value~$\#$. 
This value will never be used in a computation but allows for resolving offsets correctly without shifting the offsets depending on the evaluation status of the target stream. 
Input streams are immediately extended by their new values, and windows evict outdated buckets. Thus:
\begin{align*}
  &\forall i \leq \numin\colon  \signal{upd}_i = \signal{d\textsubscript{in}}[\sum_{n\leq i} (s_n + 1)] \\
  &\forall j \leq \numout\colon \signal{upd}_j = \signal{d\textsubscript{in}}[\sum(s_i + 1) + \sizets + j] \\
  &\forall \eta \leq \numwind\colon \signal{evict}_\eta = 1
\end{align*}
The structure of input, output and window components is depicted in $\Cref{fig:schematic:innercomps}$. In the input stream components we get the following behavior for a rising edge in \signal{upd\textsubscript{i}} where $\capa(i)$ describes the greatest offset of any lookup with target $i$:
\begin{align*}
  \signal{done}^t &= \signal{upd}^t \\
  \register{R}^0_n &= 0^{s_i + 1} \\
  \register{R}^{t+1}_n &= \threepartdef{%
            \register{R}^t_{n+1}%
          }{%
            \signal{upd}^{t+1} \land n \neq \capa(s_i)
          }{%
            \register{R}^t_n
          }{%
            \neg\signal{upd}^{t+1}
          }{%
            \signal{d\textsubscript{in}}^{t+1}\circ 1
          }[%
            \signal{upd}^{t+1} \land n = \capa(s_i)
          ] \\
  \signal{d\textsubscript{out}}^0 &= 0^{\capa(i)\cdot (s_i + 1)} \\
  \signal{d\textsubscript{out}}^{t+1} &= \register{R}^t_1 \circ \dots \circ \register{R}^t_{\capa(s_i)}
\end{align*}
By storing $\capa(i)$ values for any stream $i$, all offsets can be resolved when evaluating stream expressions.

Output streams on a rising edge of \signal{pe} behave as follows:
\begin{align*}
  \signal{done}^t &= \signal{pe}^t \\
  \register{R}_n^0 &= 0^{\capa(j)\cdot(s_j+1)} \\
  \register{R}_n^{t+1} &= \twopartdef{\#}{n = \capa(j)}{\register{R}_{n+1}^t} \\
  \signal{d\textsubscript{out}}^0 &= 0^{\capa(j)\cdot(s_j + 1)} \\
  \signal{d\textsubscript{out}}^{t+1} &= \register{R}^t_1 \circ \dots \circ \register{R}^t_{\capa(s_j)}
\end{align*}
For windows, the number of buckets is $\beta$, \ie, the length of the window $\mathit{dur}_\eta$ multiplied with the extend frequency $f_\eta$ of stream in which the window occurs. 
On a rising edge of \signal{evict}, \signal{d\textsubscript{in}} carries the current time stamp in the first $\sizets$ bits. 
The window requires this information to decide whether new buckets are outdated. 
If so, the values of all registers are shifted and the now-empty bucket is initialized with $\epsilon$. 
The internal \register{T} register stores the time when the next bucket becomes outdated.
\begin{align*}
  \register{T}^0 &= 0^\sizets \\
  \register{T}^{t+1} &= \twopartdef{%
            \register{T}^t
          }{%
            \signal{d\textsubscript{in}}[\dots\sizets] \leq \register{T}^t
          }{%
            \register{T}^t + f_\eta
          } \\
  \signal{done}^0 &= 0 \\
  \signal{done}^{t+1} &= \signal{d\textsubscript{in}}[\dots\sizets] \leq \register{T}^t\\
  \register{R}_n^0 &= \epsilon \\
  \register{R}_n^{t+1} &= \threepartdef{
            \epsilon
          }{
            n = \beta \land \signal{d\textsubscript{in}}[\dots\sizets] > \register{T}^t
          }{
            \register{R}_{n+1}^t
          }{
            n \neq \beta \land \signal{d\textsubscript{in}}[\dots\sizets] > \register{T}^t
          }{
            \register{R}_n^t
          }[
            \signal{d\textsubscript{in}}[\dots\sizets] \leq \register{T}^t
          ]
\end{align*}
Signal \signal{done\textsubscript{1}} indicates that phase 1 of the evaluation is complete:
\[
  \signal{done\textsubscript{1}} = \bigwedge_{i \leq \numin} \signal{upd}_i \implies \signal{done}_i \land \bigwedge_{j \leq \numout} \signal{pe}_j \implies \signal{done}_j \land \bigwedge_{\eta \leq \numtrig} \signal{done}_\eta
\]
Note that the implication ensures that a \signal{done} signal is only relevant, if the respective component was enabled.

After \signal{done\textsubscript{1}} is raised, the \component{EvalController} transitions to phase 2 via state \smstate{2.1}. 
In the \smstate{2.x} states, streams are successively extended according to the evaluation order and windows are updated whenever the target stream computed a new value. 
Wires connect streams and windows \wrt their dependencies, \ie, all streams output a sequence of values coupled with a bit indicating its validity. 
Invalid values are then replaced with the default values specified in the stream expression. 
Window lookups require an additional computation step, initiated by the \signal{req} signal.

Formally, when transitioning to state \smstate{2.x} with $1 \leq x \leq \ell$, the \component{EvalController} raises the update signals for outputs and windows if appropriate,\ie, if the stream is in the respective evaluation layer and the \HLC indicated that the stream is affected.
\begin{align*}
  \signal{eval}_j &= j \in \layer(x) \land \signal{d\textsubscript{in}}[\sum (s_i + 1) + \sizets + j] \\
  \signal{upd}_\eta &= \signal{d\textsubscript{out\textsubscript{$\tar(\eta)$}}}[s_{\tar(\eta)}]
\end{align*}
On a rising edge of $\signal{eval}_j$, the output stream computes its new value and updates its internal state:
\begin{align*}
  \signal{done}^t &= \signal{eval}^t \\
  \register{R}_n^{t+1} &= \twopartdef{\evalexpr(j) \circ 1}{\signal{eval}^{t+1} \land n = \capa(j)}{\register{R}_{n}^t} \\
  \signal{d\textsubscript{out}}^{t+1} &= \register{R}^t_1 \circ \dots \circ \register{R}^t_{\capa(s_j)}
\end{align*}
Here, $\evalexpr(j)$ is the result of evaluating the stream expression of stream $j$. The computation can be split into several computation steps depending of the size of the expression to increase the maximum system clock frequency. In this case, the \signal{done} bit cannot be set immediately after receiving the \signal{eval} command.
Note that only $\#$ values are overwritten in this step and the valid bit is set. 
In sliding windows, a new values is added by applying the $\mathit{map}$ function and reducing it onto the last bucket.
\begin{align*}
  \register{R}^{t+1}_\beta &= \register{R}_\beta^t \oplus \map(\signal{d\textsubscript{in}}^{t+1}) \\
  \signal{done}^t &= \signal{upd}^t
\end{align*}
It requires an additional step to compute the new value of the sliding window. 
This process is initiated by the \component{EvalController} by raising the \signal{req\textsubscript{$\eta$}} flag after the window's target stream was computed. All bucket values get reduced using the aggregation's reduction function $\oplus$, and finalized afterwards. 
Since $\oplus$ is associative and the number of buckets is a compile time constant, the reduction is structured as a binary tree with logarithmic depth in the number of buckets.
This triggers the following behavior in the window:
\begin{align*}
  \signal{d\textsubscript{out}}^{t+1} &= \fin(\register{R}_1^{t} \oplus \dots \oplus \register{R}_\beta^{t}) \\
  \signal{done}_{2.x}^t &= \signal{rq}^t 
\end{align*}

\section{Case Study}

We validated the compilation with three case studies. 
The first two monitor a network and an avionic and describe realistic scenarios, whereas the third one consists of synthetic data and emphasizes the benefits of the parallel evaluation structure presented in \Cref{sec:compilation}.
All specifications were compiled into VHDL code and then synthesized on a Zynq-Z-7010 ARM/FPGA SoC Trainer Board\footnote{\url{https://reference.digilentinc.com/reference/programmable-logic/zybo/reference-manual?_ga=2.102758273.1814454663.1555084001-1980681841.1546416239}}, which is logic-equivalent to an Artix-7 FPGA. 
The Zynq-7000 features 4.400 logic slices, each with four 6-bit input LUTs and 8 flip flops.

Note that the specifications in the benchmarks are simplified for illustration purposes. The current prototype does not support a floating or fixed point unit. The limitation is a result of technical incompatibilities in the Xilinx synthesizing software; from a theoretical standpoint, the inclusion of a floating-point unit is possible. This results however in a larger circuit realization of the specifications.

\subsection{Avionics}

\Cref{fig:spec:avionics} shows a specification for a drone. Input events consist of longitude and latitude values, the velocity and the number of GPS satellites in range. The GPS module is supposed to send values for the longitude and latitude with frequency \SI{10}{\hertz}. Output stream \lstinline{gps_freq} counts the number of samples received within a second and checks if it falls below 9. In this case, the first trigger reports the unexpectedly low sample frequency.
The second trigger reports a warning when the drone's velocity drops below 700, requiring that the velocity was greater than 700 before that. For the third trigger, we use a simplified reconstruction of the distance the drone traveled using the Pythagorean theorem. A more realistic approximation can be obtained \eg by using the haversine function. The square root computation is realized using the constant-time function proposed by Li and Chu~\cite{DBLP:conf/iccd/LiC96}. The distance is then discretely differentiated to compute the velocity according to the GPS module. This allows for cross-validating sensor values by comparing the sensed input velocity with the computed one. If the two values deviate too strongly, an alarm is raised. Lastly, we detect hover phases by integrating either velocity value and checking whether it lays below a threshold value.

We compiled the specification to VHDL and synthesized a circuit on the Zynq-7000 board. We report the resource consumption in terms of required flip-flops (FF), look-up tables (LUT), multiplexers (MUX), adders (CA), and multipliers (MULT) for each component below, where "Mon" describes the entire synthesized monitor:

\begin{tabular}{r | r r r r r}
  \toprule
  Component & FF & LUT & MUX & CA & MULT \\
  \midrule
  Mon & 3036 & 3685 & 26 & 656 & 18 \\
  HLC & 901  & 156  & 0  & 22  & 0 \\ 
  Q   & 543  & 442  & 0  & 43  & 0 \\
  LLC & 1281 & 2820 & 0  & 576 & 18 \\
  \bottomrule
\end{tabular}

Note that the amount of resources like flip-flops of the entire monitor is not equal to the sum of the resources of all components. The difference is required for internal tasks such as signal management.
One can see that most flip-flops reside in the \LLC because it manages the persisted values of all streams. The \HLC requires around 70\% as many, which can be contributed to the fact that each component of the \HLC contains internal registers while the greatest offset in the specification is only $-1$, reducing the memory requirement of the \LLC. The overwhelming majority of look-up tables, adders, and multipliers reside in the \LLC which was expected given that this component implements the evaluation logic. The 18 multipliers are required for squaring the $\delta$-values and computing the integral window.

The power consumption amounted to \SI{0.121}{\watt} when idle and \SI{1.620}{\watt} when processing.

We tested the monitor in online mode with sensor data created in a simulation using the ArduPilot\footnote{\url{http://ardupilot.org/}} Copter\footnote{\url{http://ardupilot.org/copter/index.html}} drone simulator. The simulator consisted of a multicopter flying over the campus of a university. Sensor information was piped to the monitor over a serial port. 
Evaluating events and periodic deadlines took on average 428 system clock cycles with a period $\clkperiod = \SI{100}{\mega\hertz}$.
Thus, each event took on average \SI{4.28}{\micro\second} to be processed.
Here, the worst slack amounted to \SI{1.653}{\nano\second}.

\subsection{Network Monitoring}\label{sec:casestudy:network}

The network monitoring exerted an immense pressure on the monitor due to the sheer amount of input data received in a short amount of time. In this setting it is also reasonable to forgo any assumption on the input frequency.

The specification in \Cref{fig:spec:network} fixes the IP of one particular server and checks network traffic based on the source and destination IP of requests, TCP flags, and the length of the payload. 
First, the \lstinline{length} stream is filtered based on whether the server is the target and the request pushes data. 
We sum up the filtered stream for a second and trigger an alert if the amount of data spikes over \SI{10}{\mega\byte}. 
Moreover, we count the number of opened and closed incoming connections and issue an alert if the server attempts to close more connections that were opened. 
Lastly, we check for a significant amount of incoming connections in a short amount of time. 

Due to the lower complexity of the specification, the resource consumption is also generally lower compared to the avionics example.
The number of look-up tables decreases by around 60\%, adders by 65\% and multipliers by 100\%. 
The number of flip-flops only decreases by around 38\% since there is no significant difference in the number of sliding windows and lookup expressions in the two specifications, but integral windows require 5-times as much memory as summation and count windows.

\begin{tabular}{r r r r r r}
  \toprule
  Component & FF & LUT & MUX & CA & MULT \\
  \midrule
  Mon & 1905 & 1533 & 23 & 226 & 0 \\
  HLC & 550  & 161  & 0  & 37  & 0 \\
  Q   & 330  & 342  & 0  & 28  & 0 \\
  LLC & 895  & 927  & 0  & 161 & 0 \\
  \bottomrule
\end{tabular}

The power consumption amounted to \SI{0.120}{\watt} when idle and \SI{1.570}{\watt} when processing, so there is no significant difference between the two specifications.

We tested the implementation with data from the Mid-Atlantic Collegiate Cyber Defense Competition (MACCDC)\footnote{\url{https://www.netresec.com/?page=MACCDC}}. We re-played the log data in real time using the time stamps provided.

While the evaluation process is simpler, the \HLC remains mostly the same. Thus, the amount of system clock cycles required per event only decreases by around 25\%, the response time for a single event is \SI{3.2}{\micro\second} on average. 
The worst slack time, however, increased by 150\% to \SI{4.0}{\nano\second}. 
This allows for safely increasing the system clock frequency by up to \SI{200}{\mega\hertz}. 
The reason for this is that the square root computation in the avionics specification has a significantly greater depth than all operation performed while monitoring the network. 
Since the computation is taken out in a single cycle, the slack time decreases significantly.

\subsection{Parallelization}
\Cref{sec:llc} presents a compilation that produces a highly parallel evaluation process by identifying modular structures within the specification.
The modularity is maximized when a specification contains a large number of independent streams.
Practical examples of this kind of specification are command-response or geofencing specifications. Here, each reaction and each face of the fence constitutes an independent stream, allowing for a parallel evaluation.

More concretely, consider a system that receives different commands from an external entity and needs to verify the system health depending on the kind of command.
Such a specification can be found in \Cref{fig:spec:parallel}. 
The highly disjunctive nature allows for perfect parallelization: each output stream solely depends on input streams.
In this case study, the specification is realized twice, once as proposed in \Cref{sec:compilation}, and once without the parallelization of the evaluation.
Purposefully declared spurious dependencies between successive output streams enforce a sequential evaluation.
\Cref{fig:spec:parallel} contains an extract of the specification.

Neither the size of the realization, nor the power consumption when idle varied between the realizations. A stress-test successively increases the input data rate until the \LLC can no longer process events in time. 
For this, the companion processor on the Zynq sends events to the FPGA and measures the time it takes for the FPGA to produce an output. 
This measurement produces more robust result than the communication over a bus in the preceding case studies but can only be applied in the absence of periodic streams.
When processing events in the maximum frequency for each realization, the parallel realization requires slightly more power (\SI{1.582}{\watt}) than the sequential one (\SI{1.581}{\watt}). 
As opposed to that, the execution time varies significantly. 
The sequential execution requires \SI{43.83}{\micro\second}, whereas the speed of the parallel execution exceeds the computation speed of the processor, which is up to \SI{866}{\mega\hertz}, \ie \SI{3.77}{\micro\second} between sending an event and attempting to read the output. 
As a result, the measured \SI{3.77}{\micro\second} constitute an upper bound on the actual response time. 
Practically, this means that if the processor sends events to the FPGA with it maximum frequency, the parallel realization can process all events, whereas the sequential one loses 89\% of the data.


\begin{figure}[t]
  \begin{lstlisting}[basicstyle=\footnotesize\ttfamily]
input lat, lon, velo: Int32
input gps: UInt8

output gps_freq@1Hz : bool := 
  lat.aggregate(over:1s,using:count).defaults(to:10) < 9
trigger gps_freq "GPS frequency less than 9 Hz"

output fast := velo > 700
trigger fast.offset(by:-1).defaults(to:false) & !fast
  "Slowing down"

output gps_dist := sqrt($\delta$(lon)^2 + $\delta$(lat)^2)
output gps_velo := gps_dist / $\delta$(time)
trigger abs(gps_velo - velo) > 10 "Sensor deviation"

output hovering@1Hz :=
  velo.aggregate(over:5s,using:$\int$).defaults(to:5) < 1
trigger hovering "Little distance covered"
  \end{lstlisting}
  \caption{\rtlola specification for monitoring a drone.}
  \label{fig:spec:avionics}
\end{figure}
\begin{figure}[t]
  \begin{lstlisting}[basicstyle=\footnotesize\ttfamily]
constant server: Int32 = ...
input src, dst: Int32
input fin, push, syn: bool
input length: Int32

output receiver := dst = server
trigger @1Hz
  receiver.aggregate(over:0.5s,using:$\Sigma$) > 10000
  "Many incoming connections"

output received := if receiver & push
  then 0
  else length
output workload@1Hz :=
  received.aggregate(over:1s,using:$\Sigma$)
trigger workload > 10^7 "Workload too high"

output opened := 
  open.offset(by:-1).defaults(to:0) + 
    (if dest = server & syn then 1 else 0)
output closed := 
  closed.offset(by:-1).defaults(to:0) + 
    (if dest = server & fin then 1 else 0)
trigger open - closed < 0
  "Closed more connection than were open"
  \end{lstlisting}
  \caption{\rtlola specification for monitoring network traffic.}
  \label{fig:spec:network}
\end{figure}
\begin{figure}[t]
  \begin{lstlisting}[basicstyle=\footnotesize\ttfamily]
input cmd: Int16
input height, x, y, ...: Int32
    
output health_crit_1: Bool := height < 400
trigger health_crit_1 $\land$ cmd = 1
    
    ...
    
output health_crit_512: Bool :=
  x > 700 $\lor$ y < 250 $\land$ height > 300
trigger health_crit_512 $\land$ cmd = 512
  \end{lstlisting}
  \caption{\rtlola specification for a highly parallelizable property.}
  \label{fig:spec:parallel}
\end{figure}

\section{Conclusion}\label{sec:conclusion}
We have presented a hardware-based monitoring approach for stream-based real-time specifications by compiling \rtlola specifications to circuits on FPGAs. The resulting circuits are small and efficient. Unlike interpreter-based approaches, the compiler limits the circuits to the operations in the specification and allows for a high degree of parallelization.
The presented case studies show that FPGA-based stream-monitoring is feasible for non-trivial specifications. While we used a small board, the available resources were only utilized by less than 50\% and the power consumption was around \SI{1.5}{\watt} under maximal pressure. This makes the approach suitable for integration into embedded systems without draining the available resources.

Building on the work presented in this paper, the next step is to extend the FPGA approach to stream specifications with parameterization~\cite{lola2} and to investigate the applicability of FPGA-based monitoring in distributed architectures. 

\begin{acks}
  This work was partially supported by the German Research Foundation (DFG) as part of the Collaborative Research Center Foundations of Perspicuous Software Systems (TRR 248, 389792660), and by the European Research Council (ERC) Grant OSARES (No. 683300).
\end{acks}
\bibliographystyle{plain}
\bibliography{bibliography}

\end{document}